\newtheorem{theorem}{\textbf{Theorem}}
\newtheorem{definition}{\textbf{Definition}}
\newtheorem{remark}{\textbf{Remark}}
\begin{document}

\title{Signal Shaping for Two-User Gaussian Multiple Access Channels with Computation: Beyond the Cut-Set Bound}
\author{Zhiyong Chen, and Hui Liu\\
Cooperative Medianet Innovation Center, Shanghai Jiao Tong University, Shanghai, P. R. China\\
Email: {\{zhiyongchen, huiliu\}@sjtu.edu.cn}
}
\maketitle

\begin{abstract}
In this paper, we investigate the signal shaping in a two-user discrete time memoryless Gaussian multiple-access channel (MAC) with computation. It is shown that by optimizing input probability distribution, the transmission rate per transmitter is beyond the cut-set bound. In contrast with the single-user discrete memoryless channel, the Maxwell-Boltzmann distribution is no longer a good approximation to the optimal input probability distribution for this discrete-time Gaussian MAC with computation. Specifically, we derive and analyze the mutual information for this channel. Because of the computation in the destination, the mutual information is not concave in general on the input probability distribution, and then  primal-dual interior-point method is used to solve this non-convex problem. Finally, some good input probability distributions for 16-ary pulse amplitude modulation (PAM) constellation are obtained and achieve $4.0349$ dB gain over the cut-set bound for the target transmission rate $3.0067$ bits/(channel use).
\end{abstract}

\section{Introduction, Signal Model and Main Results}
In this paper, we study a two-user discrete-time memoryless Gaussian multiple access channel (MAC) with computation. The two-user discrete-time memoryless Gaussian MAC is defined by two input alphabets $\mathcal{X}_{i}$, $i=A,B$, and output alphabet $\mathcal{Y}$, and a conditional probability distribution $\Pr(Y|X_{A},X_{B})$. Here, let $X_{i}$ and $Y$ be random variables taking values in $\mathcal{X}_{i}$ and $\mathcal{Y}$, respectively. We consider $Y=X_{A}+X_{B}+Z$ and $Z\thicksim \mathcal{CN}(0,N_{0})$ is a Gaussian noise as shown Fig. \ref{systemodel2}. $Z$ is independent of $X_{i}$, $i=A,B$. Furthermore, we consider $\mathcal{X}_{A}$ and $\mathcal{X}_{B}$ are one-dimensional signal constellations, but $\mathcal{X}_{A}$ is orthogonal with $\mathcal{X}_{B}$. For example, we use $\mathcal{X}_{A}=\{a_{i}\}_{i=1}^{M}$ and $\mathcal{X}_{B}=\{\sqrt{-1}a_{i}\}_{i=1}^{M}=\sqrt{-1}\mathcal{X}_{A}$ with corresponding probability $P_{X}=\{p_{i}\}_{i=1}^{M}$, where $a_{i}$ is real number, $i=1,...,N$.

Now, we describe the \emph{computation} operation. Source $i$ has source bit messages $W_{i}$, $i=A,B$. $W_{A}$ and $W_{B}$ are independent. By using linear modulation, $W_{i}$ is mapped to the signal symbol $X_{i}$, $i=A,B$. Thus, $X_{A}$ is also independent of $X_{B}$. Different from the conventional MAC, the goal of the destination in this paper is to compute a target mod-2 sum of the messages from the received signals $Y$, i.e., $W_{C}=W_{A}\oplus W_{B}$. In this context, different $x_{A}+x_{B}$ values can represent the same $w_{C}$ value, due to the many-to-one operation of computation in the destination.

Consider the probability restriction $\sum_{i=1}^{M}p_{i}=1$ and $0\leq p_{i}\leq 1$ for $\forall i$, and the average transmission power constraint $\mathbb{E}[|x_{A}|^2]=\mathbb{E}[|x_{B}|^2]=\sum_{i=1}^{M}p_{i}|a_{i}|^2\leq P$. This model can be regarded as the multiple access phase in two-way relaying channels \cite{CutSetBound,UBound}. To the best of my knowledge, the capacity of such channels is still unknown, but it is upper bounded by
     \begin{equation}
 C_{cs}=\frac{1}{2}\log_{2}(1+\frac{P}{\sigma^{2}}),\label{cutset}
    \end{equation}
per transmitter based on the \textbf{cut-set bound} \cite{CutSetBound,UBound}. Here, we use $\sigma^{2}=N_{0}/2$ as the noise variance
per dimension. As well known, this upper bound is attained by continuous Gaussian input. For this discrete distribution inputs $\mathcal{X}_{A}$ and $\mathcal{X}_{B}$, \emph{what is the optimal signalling strategy $P_{X}$ in order to maximize the mutual information $I(W_{C};Y)$ in this two-user discrete time memoryless Gaussian MAC with computation $\Pr(y|x_{A},x_{B})$?}

\subsection{Main Results}
In this paper, we answer this fundamental problem in some cases. In particular, we consider a real-value $2^{m}$-ary pulse amplitude modulation (PAM) constellation for $\mathcal{X}_{A}$ and a complex-value $2^{m}$-ary PAM constellation for $\mathcal{X}_{B}$, i.e., $M=2^{m}$ and $a_{i}=-M+2i-1$ for $i=1,...,M$. With this orthogonal constellations, the destination can do the ambiguity-free detection, which means $w_{C}$ can be uniquely decoded.

\begin{figure}
   \centering
    \includegraphics[width=2.7in]{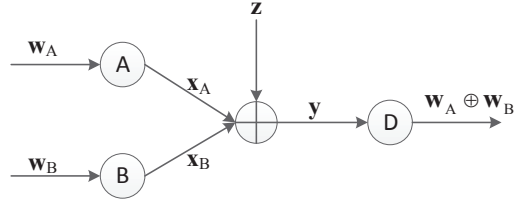}
  \centering
  \caption{A two-user discrete-time memoryless Gaussian multiple access channel with computation.}\label{systemodel2}
\end{figure}
\subsubsection{Good Input Probability Distributions}
We first derive the mutual information $I(W_{C};Y)$ for the transmission of arbitrary $\mathcal{X}_{A}$ and $\mathcal{X}_{B}$ with $P_{X}$. The optimal signalling strategy problem for an optimal choice $P_{X}^{*}$ to maximize $I(W_{C};Y)$ is then formulated. In contrast with the single-user discrete memoryless channel (DMC) where the problem of capacity computation is convex, this convexity is missing. As a result, we use primal dual interior-point method to carry out this optimization problem and obtain good input probability distributions.

The \emph{Maxwell-Boltzmann} (MB) distribution and the uniform distribution are considered as two benchmarks in this paper. The Maxwell-Boltzmann distribution provides a very good approximation to the optimal distribution obtained from the Blahut-Arimoto algorithm for the single-user DMC \cite{BA-B,BA-A,MB}, and can be written as
  \begin{equation}\label{MBdistribution}
  p_{i}=\frac{\exp\left(-\lambda|a_{i}|^2\right)}{\sum_{i}\exp\left(-\lambda|a_{i}|^2\right)},
  \end{equation}
where the parameter $\lambda$ characterizes the trade-off between the average power $P$ and entropy $H(W_{c})$. We can consume the minimum average energy which means the minimum signal-to-noise (SNR) to achieve a given transmission rate $R_{t}$ by selecting $\lambda$ properly. Taking $R_t=3.0067$ (bits per channel use) for example, the optimized value $\lambda^*=0.0295$ for 16-PAM. We have $p_{i}=1/M$ for the uniform distribution.

\begin{table}[t]
\renewcommand{\arraystretch}{1}
\centering
\caption{Good input probability distributions $P_{X}^{*}$ of 16-PAM constellation for Gaussian MAC over computation with constraints on the different target transmission rate $R_{t}$ (bits/channel use). For each input distribution the SNR threshold $(\frac{P}{\sigma^{2}})^*$ is given. Besides, applying the cut-set bound (\ref{cutset}) for the target rate $R_{t}$, we can get $(\frac{P}{\sigma^{2}})^{cs}$. We also apply the MB distribution and uniform distribution to the constellation, and obtain $(\frac{P}{\sigma^{2}})^{MB}$ and $(\frac{P}{\sigma^{2}})^{uf}$ for $R_{t}$, respectively.}\label{shapingencoder}
\begin{tabular}{c|c|c|c|c }
  \hline
 $R_t$& 3.0067 & 1.9724 & 0.9846 & 0.5239 \\
  \hline
  \hline
  $p_{1}$& $0.0002276$ & 0.0000896 & 0.0000140 &0.0184170\\
    \hline
  $p_{2}$&0.0933530 &0.0000637 &  0.0000495&0.0540220\\
    \hline
      $p_{3}$&0.0001605 &  0.0001011& 0.0250460& 0.0003871\\
    \hline
      $p_{4}$&0.0001664 &0.0845740& 0.0464800& 0.0560510\\
    \hline
      $p_{5}$& 0.1692600&0.1336100&0.0062702&0.0000330\\
    \hline
      $p_{6}$& 0.1833400&0.0001079& 0.0229170&0.0000371\\
    \hline
  $p_{7}$&0.0002105 &0.2013800&0.0000193&0.0000140\\
    \hline
      $p_{8}$&  0.2352551& 0.3054700&0.7495156&0.0000319\\
    \hline
      $p_{9}$& 0.0001460 & 0.0000463&0.0000159&0.7945245\\
    \hline
      $p_{10}$&0.0001526 &0.0000470&0.0000211&0.0000120\\
    \hline
      $p_{11}$&0.2007700 &0.1881500&0.0001262&0.0000012\\
    \hline
  $p_{12}$& 0.0002296&0.0001013& 0.0903270&0.0000113\\
    \hline
      $p_{13}$&0.0002384&0.0860131 & 0.0589980&0.0000281\\
    \hline
      $p_{14}$& 0.1157200&0.0000909& 0.0001697&0.0004899\\
    \hline
      $p_{15}$&  0.0001761&0.0000615&0.0000167&0.0256960\\
    \hline
          $p_{16}$&  0.0005942& 0.0000936&  0.0000138&0.0502440\\
    \hline
    \hline
    $(\frac{P}{\sigma^{2}})^{*}$ dB &  14 &9 &3&-1\\
    \hline
    $(\frac{P}{\sigma^{2}})^{cs}$ dB & 18.0349 &  11.5834 &4.6471 & 0.2831\\
    \hline
        $(\frac{P}{\sigma^{2}})^{MB}$ dB & 17.7858 &  11.5893&5.5165&2.3252\\
    \hline
        $(\frac{P}{\sigma^{2}})^{uf}$ dB & 22.0250 &  16.9335 &10.1647&5.8724\\
    \hline
\end{tabular}
\end{table}

The resulting input probability distributions of 16-PAM\footnote{It is easily to obtain good input probability distributions with any $n$ or other constellations, e.g., quadrature amplitude modulation (QAM).} ($M=16$) are shown in Table I for $R_{t}=3.0067, 1.9724, 0.9846$ and $0.5239$ (bits per channel use). Each column corresponds to on particular input probability distribution $\varphi_{p}^{*}$ as well as the corresponding SNR $((\frac{P}{\sigma^2})^*$ (dB)). Here, the threshold $(\frac{P}{\sigma^2})^*$ is the smallest required $\frac{P}{\sigma^2}$ such that the transmission achieves a given rate $R_{t}$. Likewise, we have the thresholds $(\frac{P}{\sigma^2})^{MB}$ and $(\frac{P}{\sigma^2})^{uf}$ for the MB distribution and the uniform distribution, respectively. Interestingly, we can observe from Table I that:
\begin{itemize}
  \item  \textbf{The uniform distribution suffers a large shaping loss}. For example, when the achievable rate $R_{t}$ is $3.0067$ bits/(channel use), the uniform distribution is far away from the cut-set bound by $3.9901$ dB, which points out the importance of signal shaping.
  \item  \textbf{In contrast with the single-user DMC, the Maxwell-Boltzmann distribution is no longer a good approximation to the optimal distribution for discrete-time Gaussian MAC with computation.} Similarly, considering $R_{t}=1.9724$ bits/(channel use), the proposed good input probability distribution $P_{X}^{*}$ achieves $2.5893$ dB gain compared with the optimized MB distribution with $\lambda^*=0.115$.
\end{itemize}
\subsubsection{Beyond the cut-set bound}
Surprisingly, we find that in discrete-time Gaussian MAC with computation, \textbf{the transmission rate with the proposed good input probability distribution can \emph{beat} the cut-set bound!} Based on the cut-set bound (\ref{cutset}), the threshold $(\frac{P}{\sigma^2})^{cs}$ on this channel is $18.0349$ dB for $C_{cs}=3.0067$ bits/(channel use), but we can get a much better threshold $(\frac{P}{\sigma^2})^{*}=14$ dB based on the proposed good input probability distribution $P_{X}^{*}$. Similarly, the proposed good input distribution can outperform the cut-set bound by $2.5834$, $1.6471$ and $1.2831$ dB for $R_t=1.9724, 0.9846$ and $0.5239$ bits/(channel use), respectively.

Because $I(W_{C};Y)$ for this channel cannot be calculated in closed form, we cannot prove this phenomenon theoretically. However, we can express this observation from the \emph{mismatch} of the cut-set bound for discrete-time Gaussian MAC with computation. Accordingly, the cut-set bound is defined as $C_{cs}=\max_{P_{X}} I(X_{A};Y|X_{B})$, which means the maximum rate achievable from source A to the destination when source B is not sending any information. In this context, when the transmission rate is below $C_{cs}$, $X_{A}$ can be reliably transmitted with arbitrarily small error probability $\Pr(w_{A}\neq \hat{w}_{A})$. However, the destination wants to obtain $w_{A}\oplus w_{B}$, rather than individual $w_{A}$ or $w_{B}$. Due to the computation, more than one superposition signals $x_{A}+x_{B}$ are mapped to one $w_{A}\oplus w_{B}$, yielding $w_{A}\oplus w_{B}=w^{'}_{A}\oplus w^{'}_{B}$ for $w_{A}\neq w^{'}_{A}$ and $w_{B}\neq w^{'}_{B}$. Thus, there maybe exist some cases such that the transmission rate per transmitter maybe can exceed $C_{cs}$, yielding $\Pr(w_{C}\neq \hat{w}_{C})\rightarrow 0$, although we have $\Pr(w_{A}\neq \hat{w}_{A})\neq 0$ or $\Pr(w_{B}\neq \hat{w}_{B})\neq 0$.

Besides, this cut-set bound $C_{cs}$ can be attained by Gaussian input $x_{A}\sim N(0, P)$. In general, there is no solution to calculate $I(W_{c};Y)$ with $X_{A}\sim N(0, P)$ and $X_{B}\sim N(0, P)$. But according to (\ref{MBdistribution}), the MB distribution is a discrete Gaussian distribution. We can use the MB distribution to verify the above-mentioned phenomenon. From Table I, we can see that the transmission rate of the MB distribution can also beat the cut-set bound at some SNR values. For example, the MB distribution with $\lambda^*=0.0295$ achieves $0.2491$ dB gain over the cut-set bound for rate $R_{t}=3.0067$ bits/(channel uses),

Accordingly, we \emph{conjecture} that \textbf{the cut-set bound is not tight for the discrete-time Gaussian MAC with computation, and can be exceeded.}

\setcounter{equation}{6}
\begin{figure*}[b]
\hrulefill
\begin{align}\label{CCCapacity}
&I(W_{C};Y)=\sum_{y}\sum_{i=1}^{M}\left(\sum_{(k,l)\in\Omega_{i}}p_{k}p_{l}\right)\Pr(y|w_{C}^{i})\log\frac{\Pr(y|w_{C}^{i})}{\sum_{j=1}^{M}\left(\sum_{(k^{'},l^{'})\in\Omega_{j}}p_{k'}p_{l'}\right)\Pr(y|w_{C}^{j})}\\
&=-\sum_{i=1}^{M}\left(\sum_{(k,l)\in\Omega_{i}}p_{k}p_{l}\log\sum_{(k,l)\in\Omega_{i}}p_{k}p_{l}\right)+\sum_{y}\sum_{i=1}^{M}\left(\sum_{(k,l)\in\Omega_{i}}p_{k}p_{l}\Pr(y|x_{AB}^{i}(k,l))\right)\log\frac{\sum_{(k,l)\in\Omega_{i}}p_{k}p_{l}\Pr(y|x_{AB}^{i}(k,l))}{\sum_{j=1}^{M}\left(\sum_{(k^{'},l^{'})\in\Omega_{j}}p_{k^{'}}p_{l^{'}}\Pr(y|x_{AB}^{j}(k^{'},l^{'}))\right)}.\nonumber
\end{align}
\end{figure*}
\subsection{Related Work}
For the single-user DMC, the mutual information is a concave function of the input probability distribution and Kuhn-Tucker condition is necessary and sufficient for a distribution to maximize the mutual information. The Blahut-Arimoto algorithm is then developed to compute the optimal input probability distribution \cite{BA-B,BA-A}, and can be approached by the MB distribution \cite{MB}. Indeed, by selecting constellation points properly based on the MB distribution at any dimension, the ultimate shaping gain (1.53 dB) can be achieved \cite{MB}.

For discrete-time memoryless Gaussian MAC, the mutual information (e.g., $I(X_{A},X_{B};Y)$) is not concave on the input probability distribution. However, with the binary input, the total capacity can be calculated for a two-user discrete-time memoryless Gaussian MAC \cite{MAC-Bin}. Then, a generalized Blahut-Arimoto algorithm has been developed for computation of the total capacity of discrete-time memoryless Gaussian MAC \cite{MAC-BA}. More recently, a two-user Gaussian MAC under peak power constraints at the transmitters is addressed in \cite{MAC-bound}, which proves that discrete distributions with a finite number of mass points can achieve any point on the boundary of the capacity region.

Instead of reconstructing all the signals of each transmitter, the destination only reconstructs a function of sources in a MAC over computation \cite{MAC-Com}. The work of \cite{MAC-Com} presents that structured codes can achieve higher computation rates for computing the modulo-sum of two messages in Gaussian MAC over computation. Accordingly, \cite{CutSetBound} achieves a rate of $\frac{1}{2}\log(\frac{1}{2}+\frac{P}{\sigma^2})$ by using lattice coding in the multiple access phase of the two-way relaying channels. The results are effective in understanding specific features of different computation functions inherent the model. However, there remain much fundamental problems to be done, e.g., the optimal input probability distribution for discrete memoryless Gaussian MAC with computation.
\section{Problem formulation and Analysis}
\begin{definition}
Let $x_{AB}=x_{A}+ x_{B}$ denote the superimposed signal without noise in the destination.  Let $\mathcal{V}_{i}=\{x_{AB}^{i}|w_{C}^{i}\}$ denote the signal set with respect to a given computation messages $w_{C}^{i}$, for $i=1,\cdots,M$. We use natural mapping for $2^m$-ary PAM in this paper, e.g., $w_{C}^{3}=0010$ for $M=16$. There are $M$ different pairs of $(w_{A}, w_{B})$ associated with the same $w_{C}^{i}$. Thus, the cardinality of the signal set is $|\mathcal{V}_{i}|=M$.
 \end{definition}
 \begin{definition}
Let $\Omega_{i}=\{(k,l):x_{AB}^{i}(k,l)=a_{k}+\sqrt{-1}a_{l}\in \mathcal{V}_{i},1\leq k\leq M, 1\leq l \leq M\}$ denote the index pairs set of $(x_{A}, x_{B})$ of associated with the same $w_{C}^{i}$. We then have $|\Omega_{i}|=M$ and the probability of $w_{C}^{i}$ can be calculated as
\setcounter{equation}{2}
\begin{equation}
\Pr(w_{C}^{i})= \sum_{(k,l)\in \Omega_{i}}p_{k}p_{l},\label{pr_W}
\end{equation}
for $i=1,\cdots, M$.
 \end{definition}
\begin{remark}
Therefore, the entropy of $W_{C}$ is
\begin{equation}
H(W_{C})=-\sum_{i=1}^{M}\sum_{(k,l)\in\Omega_{i}}p_{k}p_{l}\log_{2}{\sum_{{(k,l)}\in\Omega_{i}}p_{k}p_{l}}.
\end{equation}

The first- and second-order derivatives of $H(W_{C})$ are
\begin{align}
\frac{\partial H(W_{C})}{\partial p_{i}}=-\frac{2}{\ln2}-2\sum_{j=1}^{M}p_{j}^{i}\left(\log_2\sum_{(k,l)\in\Omega_{j}}p_{k}p_{l}\right),
\end{align}

\begin{align}
\frac{\partial^{2} H(W_{C})}{\partial p_{i}\partial p_{n}}&=-2\log_{2}\sum_{(k,l)\in\Omega_{j'}}p_{k}p_{l}-\frac{4}{\ln2}\frac{p_{i}p_{n}}{\sum_{(k,l)\in\Omega_{j'}}p_{k}p_{l}},\nonumber\\
&-\sum_{j=1,j\neq j'}^{M}\frac{4p_{j}^{i}p_{j}^{n}}{\ln2\sum_{(k,l)\in\Omega_{j}}p_{k}p_{l}}, \label{EntSe}
\end{align}
where $p_{j}^{i}$ is the probability of $a_{k}$ or $\sqrt{-1}a_{k}$, yielding $a_{k}+\sqrt{-1}a_{i}\in\Omega_{j}$ or $a_{i}+\sqrt{-1}a_{k}\in\Omega_{j}$ for given $i$, respectively. $\Omega_{j'}$ denotes the index pairs set of $(x_{A}, x_{B})$ including of $(k,l)=(i,n)$ or $(k,l)=(n,i)$. We can see from (\ref{EntSe}) that $\frac{\partial^{2} H(W_{C})}{\partial p_{i}\partial p_{n}}$ is dependent of $P_{X}$, so that the convexity of $H(W_{C})$ on $P_{X}$ is missing.
\end{remark}

Fortunately, because $x_{A}$ is orthogonal with $x_{B}$, we have $\Pr(y|x_{A},x_{B})=\Pr(y|x_{AB})$ for all $(x_{A},x_{B})\in\mathcal{X}_{A} \times \mathcal{X}_{B}$. The mutual information $I(W_{C};Y)$ is then given by the following theorem.
\begin{theorem}
The mutual information $I(W_{C};Y)$ of a discrete-time memoryless Gaussian MAC with computation $\Pr(y|x_{A},x_{B})$ is shown in (\ref{CCCapacity}).
\end{theorem}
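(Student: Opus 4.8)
The plan is to start from the definition $I(W_C;Y)=\sum_y\sum_{i=1}^M\Pr(w_C^i,y)\log\frac{\Pr(y\mid w_C^i)}{\Pr(y)}$ and to rewrite every factor in terms of the input probabilities $\{p_i\}$ and the channel density $\Pr(y\mid x_{AB})$. The key structural fact is that $W_C=W_A\oplus W_B$ is a \emph{deterministic} function of $(X_A,X_B)$, indeed of the superimposed symbol $X_{AB}$: since $\mathcal{X}_A$ is orthogonal to $\mathcal{X}_B$, the real and imaginary parts of $x_{AB}=a_k+\sqrt{-1}a_l$ recover $(a_k,a_l)$, hence $(w_A,w_B)$ by invertibility of the linear modulation, hence $w_C$. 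Therefore the index sets $\Omega_i$ of Definition 2 partition the $M^2$ pairs into $M$ classes, and from the independence $\Pr(X_A=a_k,X_B=\sqrt{-1}a_l)=p_kp_l$ one gets $\Pr(w_C^i)=\pi_i:=\sum_{(k,l)\in\Omega_i}p_kp_l$ as in (\ref{pr_W}).

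Next I would compute the remaining laws. Since $W_C$ is a function of $(X_A,X_B)$, conditioning on $\{W_C=w_C^i\}$ merely restricts $(X_A,X_B)$ to $\Omega_i$ and renormalises, so $\Pr(X_A=a_k,X_B=\sqrt{-1}a_l\mid w_C^i)=p_kp_l/\pi_i$ on $\Omega_i$. Using the orthogonality identity $\Pr(y\mid x_A,x_B)=\Pr(y\mid x_{AB})$ from just before the theorem, set $g_i(y):=\sum_{(k,l)\in\Omega_i}p_kp_l\,\Pr(y\mid x_{AB}^i(k,l))$; then
\[
\Pr(y\mid w_C^i)=\frac{g_i(y)}{\pi_i},\qquad \Pr(w_C^i,y)=g_i(y),\qquad \Pr(y)=\sum_{j=1}^M g_j(y).
\]
Substituting these into the definition yields the first line of (\ref{CCCapacity}) verbatim, since its coefficient $\bigl(\sum_{(k,l)\in\Omega_i}p_kp_l\bigr)\Pr(y\mid w_C^i)$ equals $g_i(y)$ and its denominator equals $\Pr(y)=\sum_j g_j(y)$.

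For the second line I would write $\log\frac{\Pr(y\mid w_C^i)}{\Pr(y)}=\log g_i(y)-\log\pi_i-\log\sum_j g_j(y)$ and split the triple sum accordingly. The $-\log\pi_i$ piece does not depend on $y$, so summing $g_i(y)$ over $y$ via $\sum_y\Pr(y\mid x_{AB}^i(k,l))=1$ collapses it to $\sum_y g_i(y)=\pi_i$, leaving $-\sum_i\pi_i\log\pi_i$, which is exactly the $H(W_C)$ term of the Remark and the first term of (\ref{CCCapacity}). The two remaining pieces reassemble into $\sum_y\sum_i g_i(y)\log\frac{g_i(y)}{\sum_j g_j(y)}$, and expanding $g_i(y)$ and $g_j(y)$ back out is precisely the second summation of (\ref{CCCapacity}). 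This finishes the derivation.

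There is no deep obstacle; the argument is essentially careful bookkeeping. The one step that needs genuine care is the expression for $\Pr(y\mid w_C^i)$: one must check that conditioning on the computed message induces exactly the product-form restriction $p_kp_l/\pi_i$ over $\Omega_i$, which relies jointly on the independence of $W_A$ and $W_B$, the invertibility of the linear modulation (so each $(k,l)\in\Omega_i$ corresponds to one admissible $(w_A,w_B)$ with $w_A\oplus w_B=w_C^i$), and the orthogonality of the constellations, which simultaneously delivers $\Pr(y\mid x_A,x_B)=\Pr(y\mid x_{AB})$ and the ambiguity-free property that keeps the classes $\Omega_i$ disjoint. A minor caveat is that $Y$ is continuous for the Gaussian channel, so each $\sum_y$ is really $\int dy$ against $\Pr(y\mid x_{AB})=\frac{1}{2\pi\sigma^2}\exp\!\bigl(-|y-x_{AB}|^2/2\sigma^2\bigr)$; the interchanges of sum and integral and the normalisation $\int\Pr(y\mid x_{AB})\,dy=1$ used above are all legitimate, so the argument is unchanged.
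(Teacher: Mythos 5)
Your proposal is correct and follows essentially the same route as the paper: both hinge on computing $\Pr(y\mid w_{C}^{i})=\sum_{(k,l)\in\Omega_{i}}p_{k}p_{l}\Pr(y\mid x_{AB}^{i}(k,l))/\sum_{(k,l)\in\Omega_{i}}p_{k}p_{l}$ from the independence of $X_{A},X_{B}$ and the orthogonality identity $\Pr(y\mid x_{A},x_{B})=\Pr(y\mid x_{AB})$, and then substituting into the standard definition of mutual information. You are in fact somewhat more explicit than the paper, which stops at the expression for $\Pr(y\mid w_{C}^{i})$ and leaves the splitting of the logarithm that produces the $H(W_{C})$ term in the second line of (\ref{CCCapacity}) to the reader.
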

\begin{proof}
Given $w_{C}^i$, $i=1,\cdots,M$, the conditional probability $\Pr(y|w_{C}^{i})$ can be written as
\setcounter{equation}{7}
\begin{align}\label{prob_2}
\Pr(y|w_{C}^{i})&=\frac{\Pr(w_{C}^{i}|y)\Pr(y)}{\Pr(w_{C}^{i})}=\sum\limits_{(k,l)\in\Omega_{i}}{\frac{{\Pr(x_{AB}^{i}(k,l)|y)}\Pr(y)}{\Pr(w_{C}^{i})}}\nonumber\\
&=\sum\limits_{(k,l)\in\Omega_{i}}{\frac{{\Pr(y|x_{AB}^{i}(k,l))}\Pr(x_{AB}^{i}(k,l))}{\Pr(w_{C}^{i})}}\nonumber\\
&\overset{(i)}{=}\sum\limits_{(k,l)\in\Omega_{i}}{\frac{{p_{k}p_{l}}\Pr(y|x_{AB}^{i}(k,l))}{\sum_{(k,l)\in \Omega_{i}}p_{k}p_{l}}}
\end{align}
where Steps $(i)$ is based on $\Pr(x_{AB}^{i}(k,l))=p_{k}p_{l}$. Accordingly, we have Theorem 1.
\end{proof}

According to (\ref{prob_2}), $\Pr(y|w_{C}^{i})$ is dependent on $P_{X}$ given the channel transform matrix $\Pr(y|x_{A},x_{B})$. If $P_{X}$ is uniform distribution, i.e., $\Pr_{i}=\frac{1}{M}$ for all $i$, $\Pr(y|w_{C}^{i})$ becomes $\frac{1}{M}\sum_{(k,l)\in\Omega_{i}}{\Pr(y|x_{AB}^{i}(k,l))}$.

\begin{remark}
The mutual information $I(W_{C};Y)$ can also be written as $I(W_{C};Y)=\sum_{y}\sum_{i=1}^{M}\Pr(w_{C}^{i})\Pr(y|w_{C}^{i})\log\frac{\Pr(y|w_{C}^{i})}{\sum_{j=1}^{M}\Pr(w_{C}^{i})\Pr(y|w_{C}^{i})}$. According to the concavity of mutual information, for fixed $\Pr(y|w_{C}^{i})$, $i=1,...,M$, $I(W_{C};Y)$ is a concave functional of $\Pr(w_{C})$. However, the channel transform matrix is $\Pr(y|x_{A},x_{B})$, not $\Pr(y|w_{C})$.
\end{remark}

Therefore, the optimal input probability distribution problem for a discrete-time memoryless Gaussian MAC with computation $\Pr(y|x_{A},x_{B})$  can be formulated as
\begin{align}
&\max_{P_{X}}\quad I(W_{C};Y) \nonumber\\
&\quad\mathnormal{s.t.}\quad\sum_{i=1}^{M}p_{i}=1 \nonumber\\
&~\quad\quad\quad p_{i}\geq0, \quad i=1,...M \nonumber\\
&\quad\quad\quad\sum_{i=1}^{M}p_{i}|a_{i}|^2\leq P \label{condition3}
\end{align}

Based on $\sum_{i=1}^{M}p_{i}=1$, $P_{X}$ is located on an $(M-1)$-dimensional simplex $D_{p}$. Similar to \cite{MAC-Bin}, $I(W_{C};Y)$ is not in general concave on the input probability distribution $P_{X}$. Because $\frac{\partial \Pr_(y|w_{C}^{j})}{\partial p_{i}}$ does not equal $0$ and $I(W_{C};Y)$ is a function of $P_{X}\bigotimes P_{X}$, it is difficult to evaluate obtain the necessary condition for the optimal problem based on Kuhn-Tucker condition. Here, $\bigotimes$ denotes Kronecker product.

For this non-convex problem, we can use some stochastic optimization algorithm to solve this non-convex problem \cite{convex}, e.g, genetic algorithm or annealing algorithm. However, stochastic algorithms have high complexity and cannot guarantee a global optimal solution. In this paper, we use primal-dual interior-point algorithm with random initial values to find the optimal solution \cite{trust_region}. Primal-dual interior-point method is a deterministic optimization algorithm, where every feasible initial values is related to a local optimal solution. As a result, we can use different initial values to approach the global optimal solution.

\section{Numerical Results}
We consider 16-PAM constellation in the simulation. Based on Theorem 1 and primal-dual interior-point algorithm, we can search good input probability distribution $P_{X}^{*}$ and then obtain the transmission rate at each $\frac{P}{\sigma^2}$ per transmitter. Fig. \ref{capacity_16PAM} plots the transmission rate per transmitter for increasing $\frac{P}{\sigma^2}$ with different $P_{X}$, where some corresponding proposed good input probability distributions $P_{X}^{*}$ are listed in Table I.

From the simulation results, it is clearly show that with the uniform distribution, the system suffers a large shaping loss, where the shaping loss is larger than $1.53$ dB. For example, the gap at 2.5 (bits/channel use) between the cut-set bound and the achievable rate with uniform distribution is $4.67$ dB. Thus, it is very necessary to do the signal shaping. Moreover, the results show that the proposed good input probability distribution has significantly shaping gain compared with the uniform distribution, e.g., $8.38$ dB in $2.5$ bits/(channel use).
\begin{figure}[t]
   \centering
    \includegraphics[width=3.5in,height=3in]{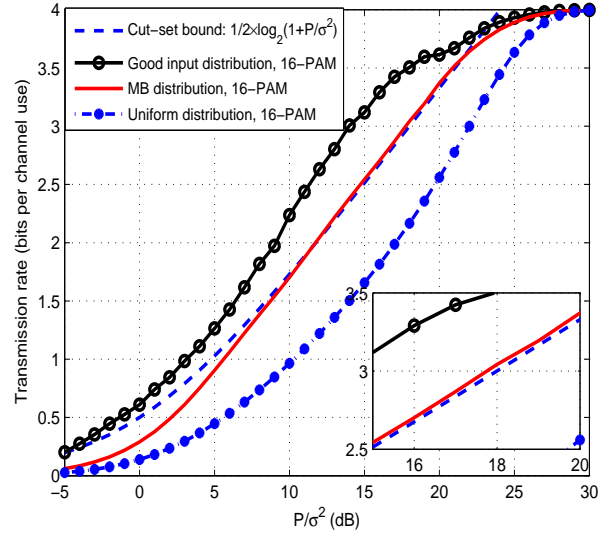}
  \caption{Transmission rate per transmitter of 16-PAM with different distributions in discrete time Gaussian multiple access channel with
computation.}\label{capacity_16PAM}
\end{figure}

Interestingly, it can also be seen that for fixed $\frac{P}{\sigma^2}$ $(-5~dB\leq\frac{P}{\sigma^2}\leq25$ dB), the achievable rate per transmitter with $P_{X}^{*}$ based 16-PAM is larger than the cut-set bound. For example, for rate-2.5 bit per channel use, $P_{X}^{*}$ provides gain of 3.71 dB compared the cut-set bound. Notice also that the transmission rate based the MB distribution is very close to the cut-set bound, even more than the cut-set bound for high SNR ($11.5$ dB $\leq\frac{P}{\sigma^2}\leq21.5$ dB). The reason for this observation is presented in Section I-A.

Fig. \ref{dist_plot} plots the good distribution and the corresponding probability distribution of $W_{C}$ for $R_{t}=3.0067$ bits/(channel use). It is clear from this figure that this $P_{X}$ has more volatility than $\Pr(W_{C})$. This behavior implies the computation operation can smooth the peak-to-average probability. This can be very useful to improve the entropy. With this probability distribution, we have $H(X_{A})=H(X_{B})=2.5455$ bits, but the entropy of $W_{C}$ is increased to $H_{W_{C}}= 3.7959$ bits.

\begin{figure}[t]
   \centering
    \includegraphics[width=3.5in,height=2.7in]{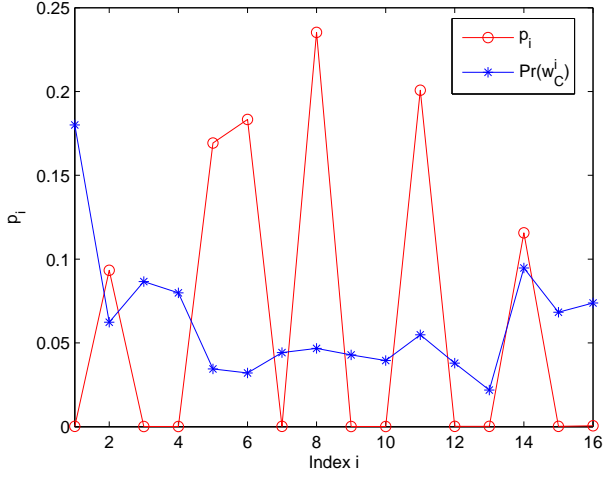}
  \caption{An good input probability distribution $P_{X}=\{p_{i}\}_{i=1}^{16}$ for 16-PAM and the probability of $W_{C}$ at $R_{t}=3.0067$ bits/(channel use).}\label{dist_plot}
\end{figure}

\section{Extension and Discussion}
\subsection{Different input probability distributions for different transmitters}
In previous sections, two orthogonal 1-D constellations with the same input probability distribution $P_{X}$ are used. Obviously, it can be extended to two orthogonal 1-D constellations with different input probability distribution $P_{X}$ and $Q_{X}$. Let $Q_{X}=\{q_{i}\}_{i=1}^{M}$ be the input probability distribution of $\mathcal{X}_{B}$. Therefore, (\ref{pr_W}) becomes $\sum_{(k,l)\in \Omega_{i}}p_{k}q_{l}$ and then we can get $I(W_{C};Y)$ by substituting $\Pr(w_{C}^{i})$ into (\ref{CCCapacity}). $Q_{X}$ is also located on an $(M-1)$-dimensional simplex $Q_{p}$, and $D_{p}\bigotimes Q_{p}$ is a domain of $P_{X}\bigotimes Q_{X}$. As a result, we need to search $P_{X}^{*}$ and $Q_{X}^{*}$ in $D_{p}\bigotimes Q_{p}$ to maximize $I(W_{C};Y)$. By increasing the search dimension, $I(W_{C};Y)$ based on $P_{X}^{*}$ and $Q_{X}^{*}$ is no less than that based on $P_{X}^{*}$ and $P_{X}^{*}$.

Taking 16-PAM constellation for example, for given $\frac{P}{\sigma^2}=9$ dB, $I(W_{C};Y)$ can achieve 2.9762 bits/(channel use) based on the following $P_{X}^{*}$ and $Q_{X}^{*}$.
\begin{align}
P_{X}^{*}&=10^{-4}\times\left\{2,1226 ,   2,   2,   3,    3326,    2,    2,   2,  3701,  2,  2,  2,\right. \nonumber\\
&~~~~~~~~~~~~~\left. 1721,  2, 3\right\};\\
Q_{X}^{*}&=10^{-4}\times\left\{12,2,2,1466,2,2,3523,2,2,3517,2,2,\right. \nonumber\\
&~~~~~~~~~~~~~\left. 1457,2,2,5\right\}.
\end{align}
We can see that $P_{X}^{*}$ and $Q_{X}^{*}$ have different structure. Compared with the result in Table I, this transmission provides significant gain ($1$ bits/(channel use)) compared with the transmission based on $P_{X}^{*}$ and $P_{X}^{*}$.
\subsection{2-D Constellation}
We also can extend our results to 2-D constellation. Notice that due to the requirement of ambiguity-free detection in destination, not all 2-D constellations can be used. In addition, by using 2-D constellation, one signal $x_{AB}$ can be mapped to the same $w_{C}$ value more than one time, because different signal pairs $(x_{A},x_{B})$ may share the same signal $x_{AB}$ value \cite{MLC}, i.e. $x_{A}+x_{B}=x_{A}^{'}+x_{B}^{'}$ with $(x_{A}\neq x_{A}^{'},x_{A}\neq x_{B}^{'})$. Combining with Theorem 1 in \cite{MLC}, we also can obtain $I(W_{C};Y)$ with $P_{X}$.

Consider 16 quadrature amplitude modulation (QAM) with Gray mapping, where it is ambiguity-free. Some good input probability distributions $P_{X}^{*}$ for 16-QAM are listed in Table II at different $R_{t}$. We can see that for 16-QAM, the proposed $P_{X}^{*}$ also outperform the cut-set bound by $0.3$ and $0.57$ dB at $R_{t}=3.2494$ and $R_{t}=2.7475$ bits/(channel use), respectively.
\section{Conclusions}
We address the optimization problem of the input probability distribution to maximize the mutual information for a two-user Gaussian MAC with computation. We formulate and analyze the optimization problem, and then use the primal-dual interior-point algorithm to search the optimal input probability distribution. The main results are summarized as follows:
\begin{itemize}
  \item The uniform distribution suffers a large shaping loss compared with the cut-set bound, where the shaping loss is larger than $1.53$ dB.
  \item The Maxwell-Boltzmann distribution also suffers a large performance loss compared with the optimal input probability distribution.
  \item The proposed input probability distribution can achieve a significant gain compared with the cut-set bound.
\end{itemize}

\begin{table}[t]
\renewcommand{\arraystretch}{1}
\centering
\caption{Good input probability distribution $P_{X}^{*}$ of 16-QAM constellation with Gray mapping for Gaussian MAC over computation for different target transmission rates $R_{t}$ (bits/channel use).}\label{shapingencoder}
\begin{tabular}{c|c|c|c|c }
  \hline
 $R_t$& 3.9176 & 3.2494 & 2.7475 & 1.5412\\
  \hline
    \hline
    $(\frac{P}{2\sigma^2})^{*}$ dB &  14 &9 &7&3\\
    \hline
    $(\frac{P}{2\sigma^2})^{cs}$ dB & 11.4958 &  9.2991 & 7.5706 & 2.8112\\
    \hline
        $(\frac{P}{2\sigma^2})^{MB}$ dB & 14.025 &  9.012&7.011&3.0002\\
    \hline
        $(\frac{P}{2\sigma^2})^{uf}$ dB & 15.0681 &  10.9920 &9.1750&5.0278\\
    \hline
\end{tabular}
\end{table}

\bibliographystyle{IEEEtran}
\bibliography{IEEEabrv,BCCM_Journal}

\begin{thebibliography}{10}
\providecommand{\url}[1]{#1}
\csname url@samestyle\endcsname
\providecommand{\newblock}{\relax}
\providecommand{\bibinfo}[2]{#2}
\providecommand{\BIBentrySTDinterwordspacing}{\spaceskip=0pt\relax}
\providecommand{\BIBentryALTinterwordstretchfactor}{4}
\providecommand{\BIBentryALTinterwordspacing}{\spaceskip=\fontdimen2\font plus
\BIBentryALTinterwordstretchfactor\fontdimen3\font minus
  \fontdimen4\font\relax}
\providecommand{\BIBforeignlanguage}[2]{{%
\expandafter\ifx\csname l@#1\endcsname\relax
\typeout{** WARNING: IEEEtran.bst: No hyphenation pattern has been}%
\typeout{** loaded for the language `#1'. Using the pattern for}%
\typeout{** the default language instead.}%
\else
\language=\csname l@#1\endcsname
\fi
#2}}
\providecommand{\BIBdecl}{\relax}
\BIBdecl

\bibitem{CutSetBound}
M.~P. Wilson, K.~Narayanan, H.~D. Pfister, and A.~Sprintson, ``Joint physical
  layer coding and network coding for bidirectional relaying,'' \emph{IEEE
  Trans. Inf. Theory}, vol.~56, no.~11, pp. 5641--5654, Nov 2010.

\bibitem{UBound}
W.~Nam, S.~Y. Chung, and Y.~H. Lee, ``{Capacity of the Gaussian Two-Way Relay
  Channel to Within $\frac{1}{2}$ Bit},'' \emph{IEEE Trans. Inf. Theory},
  vol.~56, no.~11, pp. 5488--5494, Dec 2010.

\bibitem{BA-B}
R.~Blahut, ``Computation of channel capacity and rate-distortion functions,''
  \emph{IEEE Transactions on Information Theory}, vol.~18, no.~4, pp. 460--473,
  Jul 1972.

\bibitem{BA-A}
S.~Arimoto, ``An algorithm for computing the capacity of arbitrary discrete
  memoryless channels,'' \emph{IEEE Trans. Inf. Theory}, vol.~18, no.~1, pp.
  14--20, Jan 1972.

\bibitem{MB}
F.~R. Kschischang and S.~Pasupathy, ``Optimal nonuniform signaling for gaussian
  channels,'' \emph{IEEE Trans. Inf. Theory}, vol.~39, no.~3, pp. 913--929, May
  1993.

\bibitem{MAC-Bin}
Y.~Watanabe, ``The total capacity of two-user multiple-access channel with
  binary output,'' \emph{IEEE Trans. Inf. Theory}, vol.~42, no.~5, pp.
  1453--1465, Sep 1996.

\bibitem{MAC-BA}
M.~Rezaeian and A.~Grant, ``Computation of total capacity for discrete
  memoryless multiple-access channels,'' \emph{IEEE Trans. Inf. Theory},
  vol.~50, no.~11, pp. 2779--2784, Nov 2004.

\bibitem{MAC-bound}
B.~Mamandipoor, K.~Moshksar, and A.~K. Khandani, ``Capacity-achieving
  distributions in gaussian multiple access channel with peak power
  constraints,'' \emph{IEEE Trans. Inf. Theory}, vol.~60, no.~10, pp.
  6080--6092, Oct 2014.

\bibitem{MAC-Com}
B.~Nazer and M.~Gastpar, ``Computation over multiple-access channels,''
  \emph{IEEE Trans. Inf. Theory}, vol.~53, no.~10, pp. 3498--3516, Oct 2007.

\bibitem{convex}
S.~Boyd and L.~Vandenberghe, \emph{{Convex optimization}}.\hskip 1em plus 0.5em
  minus 0.4em\relax Cambridge university press, 2004.

\bibitem{trust_region}
R.~H. Byrd, J.~C. Gilbert, and J.~Nocedal, ``{A trust region method based on
  interior point techniques for nonlinear programming},'' \emph{Mathematical
  Programming}, pp. 149--185, 2000.

\bibitem{MLC}
Z.~Chen, B.~Xia, Z.~Hu, and H.~Liu, ``Design and analysis of multi-level
  physical-layer network coding for gaussian two-way relay channels,''
  \emph{IEEE Trans. Commun.}, vol.~62, no.~6, pp. 1803--1817, June 2014.

\end{thebibliography}
\end{document}